\newtheorem{theorem}{Theorem}
\theoremstyle{definition}
\theoremstyle{remark}
\newtheorem{remark}{Remark}
\def\qed{\leavevmode\unskip\penalty9999 \hbox{}\nobreak\hfill
     \quad\hbox{\leavevmode  \hbox to.77778em{%
              \hfil\vrule   \vbox to.675em%
               {\hrule width.6em\vfil\hrule}\vrule\hfil}}
     \par\vskip3pt}
\newtheorem{lemma}{Lemma}
\begin{document}
\title{Quantum information masking: deterministic versus probabilistic}

\author{Bo  Li}
%\email{libobeijing2008@163.com}
\affiliation{School of Mathematics and Computer science, Shangrao Normal University,
 Shangrao 334001, China}
 \affiliation{Quantum Information Research Center, Shangrao Normal University, Shangrao 334001, China}
 \author{Shu-han Jiang}
\affiliation{Max-Planck-Institute
for Mathematics in the Sciences, 04103 Leipzig, Germany}
\author{Xiao-Bin Liang}
%\email{liangxiaobin2004@126.com}
\affiliation{School of Mathematics and Computer science, Shangrao Normal University,
 Shangrao 334001, China}
 \affiliation{Quantum Information Research Center, Shangrao Normal University, Shangrao 334001, China}
\author{Xianqing Li-Jost}
\affiliation{Max-Planck-Institute
for Mathematics in the Sciences, 04103 Leipzig, Germany}
\author{Heng Fan}
%\email{hfan@iphy.ac.cn}
\affiliation{Institute of Physics, Chinese Academy of Sciences, Beijing 100190, China}
\author{Shao-Ming Fei}
%\email{feishm@cnu.edu.cn}
\affiliation{School of Mathematical Sciences, Capital Normal University, Beijing 100048, China}
\affiliation{Max-Planck-Institute for Mathematics in the Sciences, 04103 Leipzig, Germany}

\begin{abstract}
We investigate quantum information masking for arbitrary dimensional quantum states.
We show that mutually orthogonal quantum states can always be served for deterministic masking of quantum information. We further construct a probabilistic masking machine for linearly independent states.
It is shown that a set of $d$-dimensional states, $\{ \vert a_1 \rangle_A, \vert a_2 \rangle_A, \dots, \vert a_n \rangle_A \}$, $n \leq d$, can be probabilistically masked by a general unitary-reduction operation if  they are linearly independent.
The maximal successful probability of probabilistic masking is analyzed and
derived for the case of two initial states.

%\begin{comment}
%It is shown masking of quantum information for arbitrary quantum states is impossible.  We prove mutually orthogonal states can always be used to %deterministic mask quantum information. We further construct a probabilistic masking machine for linearly independent state,  the set $\{ \vert %a_1 \rangle_A, \vert a_2 \rangle_A, \dots, \vert a_n \rangle_A \}$ with $n \leq d$ can be used as probabilistically masked by a general %unitary-reduction operation if and only if they are linearly independent. For $n=2$, the maximal successful masking probability %$Prob(\mathcal{M})_{max}$ is also obtained.
%\end{comment}
\end{abstract}
\pacs{03.67.-a, 03.65.Ud,  03.65.Yz}
\maketitle

\section{Introduction}
 A quantum system can be manipulated by unitary evolutions together with quantum measurements \cite{Horodecki}. Numerous implementations of manipulations have been studied in the fields such as quantum cloning \cite{wootters,Scarani,fan}, quantum programming \cite{Nielsen&Chuang}, quantum key distribution \cite{Gisin,Peter,long} and quantum teleportation \cite{Bennett,Boschi,Vaidman}. It is well-known that an arbitrary unknown state cannot be cloned perfectly due to the linearity of quantum mechanics \cite{wootters}.  In the realm of deleting quantum information, it is also shown that the unitary evolution does not allow us to delete an arbitrary quantum state perfectly\cite{Pati,Satyabrata}. The restriction of unitary evolution also forbids deterministic cloning of nonorthogonal states \cite{Barnum}.  Nevertheless, states secretly chosen from a certain set can be probabilistically cloned if the states are linearly independent \cite{duan1998}.

Recently, Modi \emph{et. al} considered the problem of encoding quantum information,
given by a set of states $\{\vert a_k \rangle_{A} \in \mathcal{H}_A\}$ on system $A$, into
a set of bipartite states $\{ \vert \Psi_k \rangle_{AB} \in \mathcal{H}_A \otimes \mathcal{H}_B \}$
on systems $A$ and $B$ such that all the marginal states of $\{ \vert \Psi_k \rangle_{AB}\}$ have no information about the
index $k$ \cite{modi2018}. In this case, the quantum information indeed spreads over quantum correlation \cite{eita-o-spooky}, which is called masking of quantum information. Modi \emph{et. al} show that deterministic masking is impossible for arbitrary quantum states.

In this paper, we study the problem of what kinds of quantum states can be either deterministically or probabilistically masked. We introduce an index set of \emph{fixed reducing states}, which can be served as the target states for carrying out the quantum information masking.
If a set of $\{\vert a_k \rangle_{A} \in \mathcal{H}_A\}$, by attaching an ancillary $\{\vert b \rangle \}$ of system $B$, can be transformed into the set of fixed reducing states by unitary operations and measurements, then one can mask the quantum information in $\{\vert a_k \rangle_{A} \in \mathcal{H}_A\}$ deterministically or probabilistically.
We provide the typical form of fixed reducing states and the necessary condition for deterministic  masking. We show that a set of quantum states can be used as deterministic masking if they are mutually orthogonal, and a set of quantum states can be served as probabilistic masking if they are linearly independent. For probabilistic masking, the maximal success probability   is obtained when only two initial states.

%The outline of the paper is as follows. In Sec. \ref{Deterministicmasking}, we present the typical form of the fixed reducing states and the %necessary condition for deterministic  masking. In Sec.  \ref{Probabilisticmasking}, by using a general unitary operation and post selection of %measurement, we obtain the  necessary and sufficient condition for probabilistic masking, we then define the success probability and for the case %of only two initial states. Sec. \ref{Conclusion} is the summary and conclusion.

\section{Deterministic Quantum Information Masking}
We start with the definition of composite system states that can be regarded as the target states for quantum information masking.
Let $\mathcal{H}_A $ and $\mathcal{H}_B$ denote $d$ dimension Hilbert space, a set of states $\{ \vert \Psi_k \rangle_{AB} \in \mathcal{H}_A \otimes \mathcal{H}_B \}_{k \in \Gamma}$ is called a set of $\textit{fixed reducing states}$ if all their marginal states have no information about the index value $k$; i.e.,
\begin{equation}
\rho_A^{(k)} = \rho_A,~~\rho_B^{(k)} = \rho_B, \quad \forall k \in \Gamma,
\end{equation}
where $\Gamma$ is an index set.

Let $\rho_A = \sum_{i=1}^d \alpha_i \vert i \rangle_A \langle i \vert_A$ and $\rho_B = \sum_{i=1}^d \alpha_i \vert i \rangle_B \langle i \vert_B$ be the spectral decompositions of the reduced density matrices, where $\{ \vert i \rangle_A \}_{i=1}^d$ and $\{ \vert i \rangle_B \}_{i=1}^d$\ are the orthogonal bases in $\mathcal{H}_A$ and $\mathcal{H}_B$, respectively, $\sqrt{\alpha_1},\sqrt{\alpha_2},\dots,\sqrt{\alpha_d}$ are $d$ non-negative real singular values corresponding to the Schmidt decomposition of $\vert \Psi_k \rangle_{AB}$. The following Lemma characterizes the typical forms of the fixed reducing states.

\begin{lemma}\label{l_1}
A set of \textit{fixed reducing states} $\{ \vert \Psi_k \rangle_{AB} \}_{k \in \Gamma}$ can always be written in the following form,
\begin{equation}
\vert \Psi_k \rangle_{AB} = \sum_{i=1}^d \sqrt{\alpha_i} \vert i \rangle_A \vert b_i^{(k)}\rangle_B,
\end{equation}
where $\{\vert b_i^{(k)} \rangle_B \}_{i=1}^d$ are $d$ normalized orthogonal states in $\mathcal{H}_B$. (These $d$ states, however, can not be chosen arbitrarily, as we will see in the proof.)
\end{lemma}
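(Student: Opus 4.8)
The plan is to obtain the stated form directly by expanding $|\Psi_k\rangle_{AB}$ in the fixed eigenbasis $\{|i\rangle_A\}_{i=1}^d$ of $\rho_A$ and then reading off the constraints from the two partial traces. Intuitively this is nothing but the Schmidt decomposition of $|\Psi_k\rangle_{AB}$ with the $A$-side Schmidt basis pinned to $\{|i\rangle_A\}$ and the coefficients fixed to $\sqrt{\alpha_i}$ (they are the square roots of the eigenvalues of $\rho_A^{(k)}=\rho_A$); but phrasing the argument through partial traces avoids having to invoke uniqueness of the Schmidt decomposition when some $\alpha_i$ coincide.

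First I would write, as one always can, $|\Psi_k\rangle_{AB}=\sum_{i=1}^d |i\rangle_A|\beta_i^{(k)}\rangle_B$ with $|\beta_i^{(k)}\rangle_B\in\mathcal{H}_B$ not yet normalized. Tracing out $B$ gives $\rho_A^{(k)}=\sum_{i,j}\langle\beta_j^{(k)}|\beta_i^{(k)}\rangle\,|i\rangle_A\langle j|_A$, and equating this with $\rho_A=\sum_i\alpha_i|i\rangle_A\langle i|_A$ forces $\langle\beta_j^{(k)}|\beta_i^{(k)}\rangle=\alpha_i\delta_{ij}$ for all $i,j$. Hence for each $i$ with $\alpha_i>0$ we set $|\beta_i^{(k)}\rangle_B=\sqrt{\alpha_i}\,|b_i^{(k)}\rangle_B$, where the $|b_i^{(k)}\rangle_B$ are automatically orthonormal; for each of the (at most $d-1$) indices with $\alpha_i=0$, the vector $|\beta_i^{(k)}\rangle_B$ vanishes and we simply take $|b_i^{(k)}\rangle_B$ to be any unit vector completing the others to an orthonormal family in $\mathcal{H}_B$. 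This gives exactly $|\Psi_k\rangle_{AB}=\sum_i\sqrt{\alpha_i}|i\rangle_A|b_i^{(k)}\rangle_B$ with $\{|b_i^{(k)}\rangle_B\}_{i=1}^d$ orthonormal.

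Next, to justify the parenthetical remark that the $|b_i^{(k)}\rangle_B$ are not arbitrary, I would feed this form into the second hypothesis $\rho_B^{(k)}=\rho_B$. Tracing out $A$ yields $\rho_B^{(k)}=\sum_i\alpha_i|b_i^{(k)}\rangle_B\langle b_i^{(k)}|_B$, and equating with $\rho_B=\sum_i\alpha_i|i\rangle_B\langle i|_B$ shows that $\{|b_i^{(k)}\rangle_B\}$ must itself be an eigenbasis of $\rho_B$ with the vector labelled $i$ lying in the $\alpha_i$-eigenspace. In the nondegenerate case this pins each $|b_i^{(k)}\rangle_B$ to $|i\rangle_B$ up to a phase; in general the only remaining freedom is a unitary acting within each eigenspace of $\rho_B$ together with the arbitrary completion on the kernel, which is the precise content of the remark.

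The two partial-trace computations are routine; the only step requiring care is the bookkeeping for degenerate or vanishing $\alpha_i$, where the $A$-side Schmidt vectors are not individually determined and a naive appeal to uniqueness of the Schmidt decomposition would fail. I expect this degeneracy/kernel accounting, rather than any substantive inequality, to be the main obstacle, and the expansion-plus-trace argument above is designed precisely to dispose of it cleanly.
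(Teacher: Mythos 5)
Your proof is correct and reaches exactly the paper's conclusion, but it organizes the first half differently. The paper starts from the Schmidt decomposition $\vert \Psi_k \rangle_{AB} = \sum_i \sqrt{\alpha_i^{(k)}}\vert a_i^{(k)}\rangle_A \vert b_i^{(k)}\rangle_B$ and then declares ``without loss of generality $\vert a_i^{(k)}\rangle_A = \vert i\rangle_A$'' before concluding $\alpha_i^{(k)}=\alpha_i$ from $\rho_A^{(k)}=\rho_A$; this identification is slightly informal precisely when eigenvalues are degenerate or zero, since the $A$-side Schmidt vectors are then not individually determined. You instead expand in the fixed eigenbasis of $\rho_A$ with unnormalized relative states $\vert\beta_i^{(k)}\rangle_B$ and extract $\langle\beta_j^{(k)}\vert\beta_i^{(k)}\rangle=\alpha_i\delta_{ij}$ from the partial trace, which yields the orthonormality of the $\vert b_i^{(k)}\rangle_B$ and the coefficients $\sqrt{\alpha_i}$ in one stroke and handles the degenerate and kernel ($\alpha_i=0$) cases explicitly. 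The second half is the same in both arguments: $\rho_B^{(k)}=\rho_B$ forces each $\vert b_i^{(k)}\rangle_B$ into the $\alpha_i$-eigenspace of $\rho_B$, i.e.\ $\vert b_i^{(k)}\rangle_B = V^{(k)}\vert i\rangle_B$ with $V^{(k)}$ preserving every eigenspace, which is the paper's version of your ``not arbitrary'' remark. In short, your route buys a cleaner treatment of degeneracy at no extra cost, while the paper's Schmidt-based phrasing is shorter but leans on a WLOG that your argument makes rigorous.
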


\begin{proof}
Consider the Schmidt decomposition of $\vert \Psi_k \rangle_{AB}$,
\begin{equation}\label{psi_in_sd}
\vert \Psi_k \rangle_{AB} = \sum_{i=1}^d \sqrt{\alpha_i^{(k)}}\vert a_i^{(k)} \rangle_A \vert b_i^{(k)}\rangle_B,
\end{equation}
where $\{\vert a_i^{(k)} \rangle_A \}_{i=1}^d$ and $\{\vert b_i^{(k)} \rangle_B \}_{i=1}^d$ are two sets of $d$ normalized orthogonal states in $\mathcal{H}_A$ and $\mathcal{H}_B$, respectively. Without loss of generality, we can set $\vert a_i^{(k)} \rangle_A = \vert i \rangle_A$. Then the partial trace of states (\ref{psi_in_sd}) yields
\begin{equation}
\rho_A^{(k)} = \sum_{i=1}^d \alpha_i^{(k)} \vert i \rangle_A \langle i \vert_A,~ \rho_B^{(k)} = \sum_{i=1}^d \alpha_i^{(k)} \vert b_i^{(k)} \rangle_B \langle b_i^{(k)} \vert_B.
\end{equation}

From the definition of fixed reducing states, it follows
\begin{align}
&\rho_A^{(k)} =\rho_A, \quad \mathrm{i.e.}, \quad \alpha_i^{(k)} = \alpha_i, \\
&\rho_B^{(k)} =\rho_B, \quad \mathrm{i.e.}, \quad \vert b_i^{(k)} \rangle_B = V^{(k)} \vert i \rangle_B,
\end{align}
where $V^{(k)}$ is a unitary operator which keeps all the characteristic subspaces of $\rho_B$ invariant. That is, if we rewrite $\rho_B^{(k)}$ in the following form
\begin{equation}\label{rho_b}
\rho_B^{(k)} = \sum_{i=1}^m \tilde{\alpha}_{i} \sum_{j=1}^{l_i} \vert b_{i_j}^{(k)} \rangle_B \langle b_{i_j}^{(k)} \vert_B,
\end{equation}
where $\{\tilde{\alpha}_1,\tilde{\alpha}_2,\dots, \tilde{\alpha}_m\}$ is the underlying set of the multi-set $\{ \alpha_1, \alpha_2, \dots,\alpha_d \}$ and $l_i$ is the multiplicity of the $i$-th eigenvalue $\tilde{\alpha}_i$, $\sum_{i=1}^m l_i = d$. Then $\{ \vert b_{i_j}^{(k)} \rangle_B \}_{j=1}^{l_i}$ form a set of orthogonal basis of the $i$-th characteristic subspace, i.e.,
$\sum_{j=1}^{l_i} \vert b_{i_j}^{(k)} \rangle_B \langle b_{i_j}^{(k)} \vert_B = \sum_{j=1}^{l_i} \vert i_j \rangle_B \langle i_j \vert_B  = I_{l_i}$,
which has no information about the index $k$. This completes the proof.
\end{proof}

The Lemma provides a technique for masking quantum information. We first give two detailed examples.

\textit{Case I}~~All the singular values are the same. In this case we have
\begin{equation}
\vert \Psi_k \rangle_{AB} = \frac{1}{\sqrt{d}} \sum_{i=1}^d \vert i \rangle_A \vert b_i^{(k)}\rangle_B,
\end{equation}
where $\vert b_i^{(k)}\rangle_B = V^{(k)} \vert i \rangle_B$, $V^{(k)}$ is an arbitrary unitary operator on $\mathcal{H}_B$. One can verify that $\rho_A = \rho_B = \frac{1}{d} I_d$, which satisfy the definition of \textit{fixed reducing states}.

\textit{Case II}~~All the singular values are different. In this case the unitary operator $V^{(k)}$ can be  expressed as a unitary diagonal matrix $V^{(k)} = \mathrm{diag} (e^{i\phi_{k_1}},e^{i\phi_{k_2}},\dots,e^{i\phi_{k_d}})$. We have
\begin{equation}
\vert \Psi_k \rangle_{AB} = \sum_{i=1}^d \sqrt{\alpha_i} e^{i \phi_{k_i}} \vert i \rangle_A \vert i \rangle_B.
\end{equation}
One can also verify that $\rho_A = \sum_{i=1}^d \alpha_i \vert i \rangle_A \langle i \vert_A$ and $\rho_B = \sum_{i=1}^d \alpha_i \vert i \rangle_B \langle i \vert_B$.

In the following we call an operation $\mathcal{M}$ a quantum information masker if it maps $\{ \vert a_k \rangle_A \in \mathcal{H}_A \}_{k=1}^n$ to a set of \textit{fixed reducing states} $\{ \vert \Psi_k \rangle_{AB} \in \mathcal{H}_A \otimes \mathcal{H}_B \}_{k=1}^n$.
A quantum information masker can be always represented by a unitary operator $U_{\mathcal{M}}$ together with a quantum measurement $M$,
\begin{equation}
\mathcal{M}:~ \vert a_k \rangle_A \vert b \rangle_B \stackrel{U_{\mathcal{M}} + M}{\longrightarrow} \vert
\Psi_k \rangle_{AB}, \quad k = 1,2,\dots,n,
\end{equation}
where $\vert b \rangle_B$ is the input state of an ancillary state in system $B$.

For a deterministic masker, the following lemma tells us that the corresponding unitary operator always exists as long as the final states satisfy certain conditions.

\begin{lemma}\label{l_2}
If two sets of states $\{ \vert \phi_i \rangle \}_{i=1}^n$ and $\{ \tilde{\phi_i} \rangle \}_{i=1}^n$ satisfy the condition
\begin{align}
 \langle \tilde{\phi_i} \vert \tilde{\phi_j} \rangle = \langle \phi_i \vert \phi_j \rangle
\end{align}
for $i,j=1,2,\dots,n$, then there exists a unitary operator $U$ such that
$U \vert \phi_i \rangle = \vert \tilde{\phi_i} \rangle$ for $i=1,2,\dots,n$.
\end{lemma}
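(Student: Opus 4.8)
The plan is to first build a suitable map on the subspace spanned by the input states and then extend it to the full Hilbert space. Write $S=\mathrm{span}\{\vert\phi_i\rangle\}_{i=1}^n$ and $\tilde S=\mathrm{span}\{\vert\tilde\phi_i\rangle\}_{i=1}^n$, and define a linear map $U_0\colon S\to\tilde S$ by $U_0\vert\phi_i\rangle=\vert\tilde\phi_i\rangle$, extended linearly. The first step is to show $U_0$ is well defined, which is where the hypothesis is essential: if $\sum_i c_i\vert\phi_i\rangle=0$ then for every $j$ one gets $\langle\tilde\phi_j\vert\big(\sum_i c_i\vert\tilde\phi_i\rangle\big)=\sum_i c_i\langle\tilde\phi_j\vert\tilde\phi_i\rangle=\sum_i c_i\langle\phi_j\vert\phi_i\rangle=0$, so $\sum_i c_i\vert\tilde\phi_i\rangle$ is orthogonal to all of $\tilde S$ and hence vanishes. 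Thus every linear relation among the $\vert\phi_i\rangle$ is mirrored by one among the $\vert\tilde\phi_i\rangle$ (and conversely, by symmetry), so $U_0$ is a well-defined linear isomorphism; in particular $\dim S=\dim\tilde S$, the common rank of the Gram matrix $[\langle\phi_i\vert\phi_j\rangle]$.

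The second step is to check that $U_0$ preserves inner products. For $\vert v\rangle=\sum_i c_i\vert\phi_i\rangle$ and $\vert w\rangle=\sum_j d_j\vert\phi_j\rangle$ in $S$,
\[
\langle U_0 v\vert U_0 w\rangle=\sum_{i,j}\bar c_i d_j\,\langle\tilde\phi_i\vert\tilde\phi_j\rangle=\sum_{i,j}\bar c_i d_j\,\langle\phi_i\vert\phi_j\rangle=\langle v\vert w\rangle,
\]
using the hypothesis term by term. Hence $U_0$ is an isometric isomorphism of $S$ onto $\tilde S$.

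The final step is to extend $U_0$ to a unitary on the whole space $\mathcal H$ in which the states live (in the application, $\mathcal H=\mathcal H_A\otimes\mathcal H_B$). Since $\dim S=\dim\tilde S$, the orthogonal complements satisfy $\dim S^\perp=\dim\tilde S^\perp$, so there is some isometric isomorphism $U_1\colon S^\perp\to\tilde S^\perp$. Taking $U=U_0\oplus U_1$ with respect to the decompositions $\mathcal H=S\oplus S^\perp=\tilde S\oplus\tilde S^\perp$ gives a unitary operator on $\mathcal H$ with $U\vert\phi_i\rangle=\vert\tilde\phi_i\rangle$ for all $i$, as required.

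I expect the well-definedness step to be the only real obstacle: it is precisely there that one must invoke equality of the full Gram matrices rather than merely of the norms $\langle\phi_i\vert\phi_i\rangle$, and it is what forces $\dim S=\dim\tilde S$ so that the extension in the last step is available. Everything else is routine linear algebra.
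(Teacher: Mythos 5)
Your proof is correct: the well-definedness argument via orthogonality to $\tilde S$, the isometry check on the span, and the extension by an isometry between the orthogonal complements (which have equal dimension since both Gram matrices coincide) together give the unitary $U$ in the finite-dimensional setting the paper works in. Note that the paper states this lemma without any proof, treating it as a standard fact, so your argument simply supplies the routine linear-algebra proof it implicitly relies on; there is no competing approach in the paper to compare against.
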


Specially, for a set of orthogonal states, we have the following theorem.

\begin{theorem}\label{t_1}
The states secretly chosen from the set $\{ \vert a_1 \rangle_A, \vert a_2 \rangle_A, \dots, \vert a_n \rangle_A \}$, $n \leq d$, can be deterministically masked by a unitary operation if they are mutually orthogonal; i.e.,
\begin{equation}
\langle a_i \vert a_j \rangle_A = \delta_{ij}, \quad i,j=1,2,\dots,n.
\end{equation}
\end{theorem}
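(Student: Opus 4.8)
The plan is to construct by hand a family of target \emph{fixed reducing states} $\{\vert\Psi_k\ra_{AB}\}_{k=1}^n$ whose mutual inner products coincide with those of the inputs $\{\vert a_k\ra_A\vert b\ra_B\}_{k=1}^n$, and then to invoke Lemma~\ref{l_2} to obtain the masking unitary $U_{\mathcal{M}}$. Because the resulting map is a single unitary, no measurement is needed and the masking is deterministic, which is precisely the claim.

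For the construction I would use the ``Case~I'' form with all Schmidt coefficients equal to $1/\sqrt{d}$,
\begin{equation}
\vert\Psi_k\ra_{AB} = \frac{1}{\sqrt{d}}\sum_{i=1}^d \vert i\ra_A\, V^{(k)}\vert i\ra_B ,
\end{equation}
where each $V^{(k)}$ is a unitary on $\mathcal{H}_B$. As already checked in Case~I, such states have $\rho_A=\rho_B=\frac{1}{d}I_d$ for every choice of the $V^{(k)}$, so they are automatically fixed reducing states. A one-line computation gives $\la\Psi_i\vert\Psi_j\ra = \frac{1}{d}\,\mathrm{Tr}\big((V^{(i)})^\dagger V^{(j)}\big)$, so the $\vert\Psi_k\ra$ are mutually orthonormal exactly when the $V^{(k)}$ are pairwise orthogonal in the Hilbert--Schmidt inner product. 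To supply such a family I would take the generalized Pauli (Weyl) operators $W_{pq}=\sum_{j}\omega^{jp}\vert j+q\ra\la j\vert$ on $\mathcal{H}_B$, with indices modulo $d$ and $\omega=e^{2\pi i/d}$, which satisfy $\mathrm{Tr}(W_{pq}^\dagger W_{p'q'})=d\,\delta_{pp'}\delta_{qq'}$; since $n\le d\le d^2$, I can pick $n$ distinct pairs $(p_k,q_k)$ and set $V^{(k)}=W_{p_kq_k}$, which yields $\la\Psi_i\vert\Psi_j\ra=\delta_{ij}$.

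Taking the ancilla $\vert b\ra_B$ normalized, the inputs satisfy $\la a_i\vert a_j\ra\la b\vert b\ra=\delta_{ij}$, so the Gram matrices of $\{\vert a_k\ra_A\vert b\ra_B\}$ and $\{\vert\Psi_k\ra_{AB}\}$ agree. Lemma~\ref{l_2} then produces a unitary $U_{\mathcal{M}}$ on $\mathcal{H}_A\otimes\mathcal{H}_B$ with $U_{\mathcal{M}}\vert a_k\ra_A\vert b\ra_B=\vert\Psi_k\ra_{AB}$ for all $k$, and this $U_{\mathcal{M}}$ is the desired deterministic masker. The only step that really requires thought is the existence of the Hilbert--Schmidt-orthogonal unitary family $\{V^{(k)}\}$, and even this is robust: the Weyl operators may be replaced by any $n$-element orthonormal (with respect to $\mathrm{Tr}$) subset of $U(d)$, for instance one obtained by Gram--Schmidt from generic unitaries. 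Everything else is a routine verification, the principal one being that the chosen $\vert\Psi_k\ra$ are genuine fixed reducing states, which is immediate from Case~I.
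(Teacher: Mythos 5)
Your proposal is correct and follows the same skeleton as the paper's proof: take Case~I fixed reducing states, arrange for them to be orthonormal so that their Gram matrix matches that of the inputs $\vert a_k\rangle_A\vert b\rangle_B$, and invoke Lemma~\ref{l_2} to obtain the masking unitary. The difference lies in how the orthonormal family is produced. The paper chooses the $\vert b_i^{(k)}\rangle_B$ to be cyclic permutations $\sigma_n^{k-1}$ of the first $n$ basis vectors of $\mathcal{H}_B$, i.e.\ the unitaries $V^{(k)}$ are shift-type permutation matrices; you instead observe the clean identity $\langle\Psi_i\vert\Psi_j\rangle_{AB}=\frac{1}{d}\mathrm{Tr}\big((V^{(i)})^{\dagger}V^{(j)}\big)$ and pick the $V^{(k)}$ from the Hilbert--Schmidt--orthogonal Weyl family $W_{pq}$. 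This buys two things: it turns the orthogonality requirement into a transparent operator-orthogonality criterion valid for any unitaries, and it handles $n<d$ without ambiguity, whereas the paper's construction only specifies $\vert b_i^{(k)}\rangle_B$ for $i\le n$ and the reader must still extend the permutation to all $d$ basis vectors (e.g.\ by the full $d$-cycle, which is exactly your $W_{0,k-1}$) for the claimed $\langle\Psi_i\vert\Psi_j\rangle=\delta_{ij}$ to hold; extending by the identity on the remaining $d-n$ vectors would give overlap $(d-n)/d$. Your approach in fact shows one could mask up to $d^2$ orthogonal states of a $d^2$-dimensional $A$ if dimensions allowed, though only $n\le d$ is needed here. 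One small caveat: your closing remark that the Weyl operators ``may be replaced by any $n$-element Hilbert--Schmidt--orthonormal subset of $U(d)$, for instance one obtained by Gram--Schmidt from generic unitaries'' is too quick---Gram--Schmidt produces orthonormal operators but generally destroys unitarity, and unitarity of $V^{(k)}$ is what guarantees (via Lemma~\ref{l_1}/Case~I) that the $\vert\Psi_k\rangle_{AB}$ are fixed reducing states---so drop or amend that aside; the Weyl construction itself is complete and sufficient.
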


\begin{proof}
Here we adopt the specific form of the fixed reducing states given in \textit{Case I}; that is,
\begin{eqnarray*}
\vert \Psi_k \rangle_{AB} = \frac{1}{\sqrt{d}} \sum_{i=1}^d \vert i \rangle_A \vert b_i^{(k)}\rangle_B.
\end{eqnarray*}
Let
\begin{align}
\label{eq:condition1}
&  \{ \vert b_1^{(k)} \rangle_B, \vert b_2^{(k)} \rangle_B, \dots, \vert b_n^{(k)} \rangle_B \}  \nonumber \\
=&\{\vert \sigma_n^{k-1}(1) \rangle_B, \vert \sigma_n^{k-1}(2) \rangle_B, \dots \vert \sigma_n^{k-1}(n) \rangle_B \}, \nonumber \\
\end{align}
where $\sigma_n$ is the generator of the cyclic group $C_n$, $\sigma_n(i)=i+1$ for $1 \leq i \leq n-1$ and $\sigma_n(n)=1$; i.e.,
\begin{widetext}
\begin{align*}
\{ \vert b_1^{(1)} \rangle_B, \vert b_2^{(1)} \rangle_B, \dots, \vert b_n^{(1)} \rangle_B \} &= \{\vert 1 \rangle_B, \vert 2 \rangle_B, \dots \vert n-1 \rangle_B, \vert n \rangle_B \}, \\
\{ \vert b_1^{(2)} \rangle_B, \vert b_2^{(2)} \rangle_B, \dots, \vert b_n^{(2)} \rangle_B \} &= \{\vert 2 \rangle_B, \vert 3 \rangle_B, \dots \vert n \rangle_B, \vert 1 \rangle_B \}, \\
&\cdots \\
\{ \vert b_1^{(n)} \rangle_B, \vert b_2^{(n)} \rangle_B, \dots, \vert b_n^{(n)} \rangle_B \} &= \{\vert n \rangle_B, \vert 1 \rangle_B, \dots \vert n-2 \rangle_B, \vert n-1 \rangle_B \}.
\end{align*}
\end{widetext}
Since $\langle i \vert j \rangle_{B} = \delta_{ij}$, $i,j=1,2,\dots,n$, one has
$\langle \Psi_i \vert \Psi_j \rangle_{AB} = \delta_{ij}$.
That is to say, we find a set of fixed reducing states $\{ \vert \Psi_k \rangle_{AB} \}_{i=1}^n$ such that
$$
\langle a_i \vert a_j \rangle_A \langle b \vert b \rangle_B = \langle a_i \vert a_j \rangle_A = \langle \Psi_i \vert \Psi_j \rangle_{AB},
$$
for any $i,j=1,2,\dots,n$. Then Lemma \ref{l_2} says that there exists a unitary operator $U$ such that
$$
U \vert a_k \rangle_A \vert b \rangle_B = \vert \Psi_k \rangle_{AB}, \quad k=1,2,\dots,n.
$$
This completes the proof.
\end{proof}

As a natural generalization of hiding classical information \cite{modi2018}, Theorem \ref{t_1} shows that mutual orthogonal vectors can be used to mask quantum information.

\section{Probabilistic Quantum Information Masking}
For a probabilistic masker, one may employ quantum measurements with post-selections of the measurement results. A masker $\mathcal{M}$ is a map from $\{ \vert a_k \rangle_A \in \mathcal{H}_A \}_{k=1}^n$  to a set of fixed reducing states $\{ \vert \Psi_k \rangle_{AB} \in \mathcal{H}_A \otimes \mathcal{H}_B \}_{k=1}^n$ by a unitary operator $U_{\mathcal{M}}$ together with a measurement $M$.

\begin{theorem}\label{t_2}
The set $\{ \vert a_1 \rangle_A, \vert a_2 \rangle_A, \dots, \vert a_n \rangle_A \}$, $n \leq d$, can be probabilistically masked by a general unitary-reduction operation if  they are linearly independent.
\end{theorem}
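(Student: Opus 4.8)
The plan is to adapt the unitary-reduction construction used for probabilistic cloning, exploiting a freedom absent there: in masking we are allowed to choose the target fixed reducing states. First I would adjoin a probe system $P$ prepared in a fixed state $|P_0\rangle_P$ and look for a unitary $U_{\mathcal{M}}$ on $\mathcal{H}_A\otimes\mathcal{H}_B\otimes\mathcal{H}_P$ acting as
\[
U_{\mathcal{M}}\,|a_k\rangle_A|b\rangle_B|P_0\rangle_P \;=\; \sqrt{\gamma_k}\,|\Psi_k\rangle_{AB}|P_0\rangle_P + |\Phi_k\rangle_{ABP},
\]
where $\{|\Psi_k\rangle_{AB}\}_{k=1}^n$ is a set of fixed reducing states still to be chosen, $0<\gamma_k\le 1$, and each failure vector $|\Phi_k\rangle_{ABP}$ has its $P$-part supported on the orthogonal complement of $|P_0\rangle_P$ in $\mathcal{H}_P$. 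The measurement $M$ is then a projective measurement on $P$ distinguishing $|P_0\rangle_P$ from its complement; on the outcome $|P_0\rangle_P$, which occurs with probability $\gamma_k$, the pair $AB$ is left in the state $|\Psi_k\rangle_{AB}$, so the index $k$ is masked.

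Next I would reduce the existence of such a $U_{\mathcal{M}}$ to a Gram-matrix identity via Lemma~\ref{l_2}. Taking inner products of the two sides above and writing $X_{ij}=\langle a_i|a_j\rangle_A$, $S_{ij}=\langle\Psi_i|\Psi_j\rangle_{AB}$, $\Gamma=\mathrm{diag}(\sqrt{\gamma_1},\dots,\sqrt{\gamma_n})$ and $\Phi_{ij}=\langle\Phi_i|\Phi_j\rangle_{ABP}$, the requirement is
\[
X \;=\; \Gamma S\Gamma + \Phi .
\]
Being a Gram matrix, $\Phi$ must be positive semidefinite; conversely, for any choice of $\gamma_k$ and of targets with $X-\Gamma S\Gamma\ge 0$, the matrix $\Phi=X-\Gamma S\Gamma$ is realized by suitable vectors $|\Phi_k\rangle_{ABP}$ with $P$-part orthogonal to $|P_0\rangle_P$ (this subspace has dimension $d^2(\dim\mathcal{H}_P-1)\ge n$ once $\dim\mathcal{H}_P\ge 2$), and Lemma~\ref{l_2} then extends the map on the $n$ input vectors to a unitary on the whole space.

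The decisive step is the choice of targets. Using the mutually orthonormal fixed reducing states of \textit{Case~I} built from the cyclic permutations $\sigma_n$ exactly as in the proof of Theorem~\ref{t_1} --- which is possible precisely because $n\le d$ --- gives $S=I_n$, and the condition collapses to $X-\mathrm{diag}(\gamma_1,\dots,\gamma_n)\ge 0$. Linear independence of $\{|a_k\rangle_A\}$ is equivalent to $X$ being positive definite, so choosing $\gamma_1=\cdots=\gamma_n=\gamma$ with $0<\gamma\le\lambda_{\min}(X)$ (note $\lambda_{\min}(X)\le 1$) makes $X-\gamma I_n\ge 0$, yields a legitimate $\Phi$, and hence the desired $U_{\mathcal{M}}$, achieving probabilistic masking with uniform success probability $\gamma>0$.

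The main obstacle --- and the only place the hypothesis is essential --- is ensuring $X-\Gamma S\Gamma$ can be made positive semidefinite while keeping every $\gamma_k>0$: for linearly dependent states $X$ is merely positive semidefinite, so enforcing $\Phi\ge 0$ would force some $\gamma_k$ to $0$ and the protocol would never succeed for those states. The remaining points --- that the probe and the complement of $|P_0\rangle_P$ are large enough to host the failure vectors and to allow the unitary to be completed --- are routine bookkeeping.
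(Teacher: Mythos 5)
Your argument is correct and sits inside the same unitary-reduction framework as the paper's proof: attach a probe, force the failure branch into the probe-orthogonal complement of the success subspace, translate the existence of $U_{\mathcal{M}}$ (via Lemma \ref{l_2}) into the Gram-matrix identity $A=\Gamma S\Gamma+\Phi$ with $\Phi\geq 0$, and use positive definiteness of $A$, i.e. linear independence. Where you genuinely diverge is the last step. The paper keeps the targets $\vert\Psi_i\rangle_{AB}$ arbitrary, argues by continuity that $A-\sqrt{\Gamma}X_P\sqrt{\Gamma^{\dagger}}$ stays positive definite for sufficiently small $\gamma_i$, and then realizes it as a Gram matrix through its square root $C$; you instead fix mutually orthonormal targets so that $S=I_n$, which collapses the condition to $A-\mathrm{diag}(\gamma_1,\dots,\gamma_n)\geq 0$ and yields the explicit, quantitative conclusion that any uniform efficiency $\gamma\leq\lambda_{\min}(A)$ works. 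Your route buys an explicit success probability and avoids the somewhat loose continuity appeal; the paper's route buys generality in the choice of $\vert\Psi_i\rangle_{AB}$, which it needs later when optimizing $Prob(\mathcal{M})$ over target states. Two caveats. First, the cyclic-shift states of Theorem \ref{t_1} that you invoke are not actually mutually orthogonal when $n<d$ if the unspecified vectors $\vert b_l^{(k)}\rangle_B$, $l>n$, are taken independent of $k$ (the overlap is then $(d-n)/d$); this flaw is inherited from the paper rather than introduced by you, and it is harmless for your proof because orthonormal fixed reducing states do exist for every $n\leq d$, e.g. the generalized Bell states $\frac{1}{\sqrt{d}}\sum_{i=1}^{d}e^{2\pi\mathrm{i}(k-1)(i-1)/d}\vert i\rangle_A\vert i\rangle_B$, which give $S=I_n$ as you need. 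Second, your closing observation that linear dependence would force some $\gamma_k=0$ is valid only for invertible $S$ (such as your $S=I_n$) and must not be read as a converse to the theorem: as Remark 2 of the paper notes, linearly dependent (even continuous) families can be maskable with other target states.
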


\begin{proof}
Let $\vert P_0 \rangle, \vert P_1 \rangle, \dots, \vert P_n \rangle$ be $n+1$ normalized (not generally orthogonal) states of some probe $P$ and $\vert \Phi_{ABP}^{(1)} \rangle$, $\vert \Phi_{ABP}^{(2)} \rangle,\dots, \vert \Phi_{ABP}^{(n)} \rangle$ be $n$ normalized (not generally orthogonal) states of the composite system $ABP$. We now search for a general unitary evolution $U_{\mathcal{M}}$ on the system $ABP$ such that
\begin{equation}\label{u_mask}
U_{\mathcal{M}} (\vert a_i \rangle_A \vert b \rangle_B \vert P_0 \rangle ) = \sqrt{\gamma_i} \vert \Psi_i \rangle_{AB} \vert P_i \rangle + \sqrt{1-\gamma_i} \vert \Phi_{ABP}^{(i)} \rangle,
\end{equation}
for all $i = 1,2,\dots, n$. Here, $\vert b \rangle_B$ is the input state of an ancillary system $B$, the parameters $\gamma_i$ are positive real numbers referred as the masking efficiencies. The masking process can be implemented as follows, after the unitary evolution a measurement with a post-selection of measurement results projects the state into a subspace $S$ spanned by $\vert P_1 \rangle, \dots, \vert P_n \rangle$. After this projection, the state of the system $AB$ should be $\vert \Psi_i \rangle_{AB}$. And all the states $\vert \Phi_{ABP}^{(i)} \rangle$ ought to lie in a space orthogonal to $S$,
\begin{equation}\label{perp_cond}
\vert P_i \rangle \langle P_i \vert \vert \Phi_{ABP}^{(j)} \rangle = 0
\end{equation}
for any $i,j=1,2,\dots,n$.
From (\ref{perp_cond}) and (\ref{u_mask}) we have the following equation,
\begin{equation}\label{u_matrix}
A = \sqrt{\Gamma} X_P \sqrt{\Gamma^{\dagger}} + \sqrt{I_n - \Gamma} Y \sqrt{I_n - \Gamma^{\dagger}},
\end{equation}
where $A$ is the $n\times n$ matrix with entries given by $\langle a_i \vert a_j \rangle$,
$X_p =[\langle P_i \vert P_j \rangle \langle \Psi_i \vert \Psi_j \rangle_{AB}]$ and $Y = [\langle \Phi_{ABP}^{(i)} \vert \Phi_{ABP}^{(j)} \rangle]$. The efficiency matrix $\Gamma$ is defined by $\Gamma = \mathrm{diag} (\gamma_1, \gamma_2, \dots, \gamma_n)$.

Lemma \ref{l_2} shows that if Eq. (\ref{u_matrix}) is satisfied with a diagonal positive-definite matrix $\Gamma$, the unitary evolution (\ref{u_mask}) can be realized physically. Since the states $\vert a_1 \rangle_A, \vert a_2 \rangle_A, \dots, \vert a_n \rangle_A$ are linearly independent, $A$ is positive definite. In fact, above derivation works for any $\vert \Psi_i \rangle_{AB}$.
From the continuity, if we now choose $\vert \Psi_i \rangle_{AB}$ to be the \emph{fix reducing state}, the Hermitian matrix $A-\sqrt{\Gamma} X_P \sqrt{\Gamma^{\dagger}}$ is also positive definite for small positive $\gamma_i$. Hence, Eq. (\ref{u_matrix}) should be satisfied with a proper choice of $\vert \Phi_{ABP}^{(i)} \rangle$. Set
$$
P_i = P_0, ~ \vert \Phi_{ABP}^{(i)} \rangle = \sum_{j=1}^n c_{ij} \vert \Phi_{AB}^{(i)} \rangle \vert P^{(j)} \rangle, ~~ i=1,2,\dots,n,
$$
where $P_0,P^{(1)},P^{(2)},\dots,P^{(n)}$ are $n+1$ normalized orthogonal states and $\vert \Phi_{AB}^{(1)} \rangle,\vert \Phi_{AB}^{(2)} \rangle,\dots,\vert \Phi_{AB}^{(n)} \rangle$ are $n$ normalized (not generally orthogonal) states of the system $AB$. Thus we have $X_P = CC^{\dagger}$,
where $C=[c_{ij}]$. Furthermore, note that the matrix $A-\sqrt{\Gamma} X_P \sqrt{\Gamma^{\dagger}}$ is able to be diagonalized by a unitary matrix $V$,
$$
V(A-\sqrt{\Gamma} X_P \sqrt{\Gamma^{\dagger}})V^{\dagger} = \mathrm{diag} (m_1,m_2, \dots, m_n),
$$
where all of the eigenvalues $m_1, m_2, \dots, m_n$ are positive real numbers. Then the matrix $C$ can be chosen as
$$
C= V(\mathrm{diag} (\sqrt{m_1},\sqrt{m_2}, \dots, \sqrt{m_n}))V^{\dagger},
$$
Eq. (\ref{u_matrix}) is thus satisfied by a diagonal positive-definite matrix $\Gamma$. This completes the proof of Theorem \ref{t_2}.
\end{proof}

{
\begin{remark}
In the proof of the theorem, we choose $\vert \Psi_i \rangle_{AB}$ to be \emph{fixed reducing state} to ensure that the probabilistic masking can be implemented, since the fixed reducing states have the same local density matrices. Actually, the states $\vert a_1 \rangle_A, \vert a_2 \rangle_A, \dots, \vert a_n \rangle_A$ are linearly independent, and $A$ is strictly positive definite for any  $\vert \Psi_i \rangle_{AB}$. Eq. (\ref{u_matrix}) will always be satisfied for small $\gamma_i$.
\end{remark}
}
{
\begin{remark}
The converse of this theorem is not true. Actually, Ref.\cite{modi2018} shows that there exists a masker which can mask the quantum information encoded in a family of states characterized by some continuous parameters for finite dimension, those states certainly cannot be linear independent.
\end{remark}
}

Since the probabilistic masking process can be considered as successful if and only if all the efficiencies $\gamma_i$ are non-zero, we define the
success probability of the masking machine $\mathcal{M}$ to be
\begin{equation}
Prob(\mathcal{M})=\prod_{i=1}^n{\gamma_i}.
\end{equation}
Note that the semi-positivity of the matrix $A-\sqrt{\Gamma} X_P \sqrt{\Gamma^{\dagger}}$ gives rise to a series of inequalities about the efficiencies $\gamma_i$. Thus the maximum of $Prob(\mathcal{M})$ is obtained by solving these inequalities and taking the maximum over all possible choices of $\vert P_i \rangle$ and $\vert \Psi_k \rangle_{AB}$. For example, if there are only two initial states $\vert a_1 \rangle_A$ and $\vert a_2 \rangle_A$, we have
\begin{equation} \label{n_is_2}
A-\sqrt{\Gamma} X_P \sqrt{\Gamma^{\dagger}} = \left[ \begin{array}{cc}
1-\gamma_1 & z\\
z^{*} & 1-\gamma_2
\end{array} \right],
\end{equation}
where $z = \langle a_1 \vert a_2  \rangle_A - \sqrt{\gamma_1 \gamma_2} \langle P_1 \vert P_2 \rangle \langle \Psi_1 \vert \Psi_2 \rangle_{AB} $. The semi-positivity of (\ref{n_is_2}) yields
\begin{align}\nonumber
&\mathrm{tr}\left[ A-\sqrt{\Gamma} X_P \sqrt{\Gamma^{\dagger}} \right] = 2 - \gamma_1 - \gamma_2 \geq 0, \\\nonumber
&\mathrm{det}\left[ A-\sqrt{\Gamma} X_P \sqrt{\Gamma^{\dagger}} \right] = (1-\gamma_1)(1-\gamma_2) - \vert z \vert^2 \geq 0.\nonumber
\end{align}
Thus we have
\begin{align}
\label{eqpro1}
&  Prob(\mathcal{M}) \leq \mathrm{min} \bigg\{ \left(\frac{1-\vert \langle a_1 \vert a_2 \rangle_A \vert}{1-\vert \langle \Psi_1 \vert \Psi_2 \rangle_{AB} \vert}\right)^2,  \nonumber \\
&\left(\frac{1+\vert \langle a_1 \vert a_2 \rangle_A \vert}{1+\vert \langle \Psi_1 \vert \Psi_2 \rangle_{AB} \vert}\right)^2  \bigg\} \leq 1,
\end{align}
where the first equality holds if and only if $\gamma_1 = \gamma_2$ and $\langle P_2 \vert P_1 \rangle \langle a_1 \vert a_2 \rangle_{A} \langle \Psi_2 \vert \Psi_1 \rangle_{AB} = \vert \langle a_1 \vert a_2 \rangle_{A} \vert \vert \langle \Psi_1 \vert \Psi_2 \rangle_{AB} \vert$, while the second equality holds if and only if $\langle \Psi_1 \vert \Psi_2 \rangle_{AB} = \langle a_1 \vert a_2 \rangle_{A}$.

In Fig. \ref{fig_1} we plot the maximum probability $Prob(\mathcal{M})_{max}$ versus $\vert \langle \Psi_2 \vert \Psi_1 \rangle_{AB} \vert$ for $\vert \langle a_1 \vert a_2 \rangle_A \vert = 0;\, 0.25;\, 0.5;\, 0.75$ and $1$. One can see that the maximum probability $Prob(\mathcal{M})_{max}$ can always attain $1$ at $\vert \langle \Psi_2 \vert \Psi_1 \rangle_{AB} \vert = \vert \langle a_1 \vert a_2 \rangle_A \vert $, and decreases as the difference of $\vert \langle \Psi_2 \vert \Psi_1 \rangle_{AB} \vert $ and $ \vert \langle a_1 \vert a_2 \rangle_A \vert $ increases.

\begin{figure}[!htbp]
\begin{center}
\includegraphics[scale=0.4]{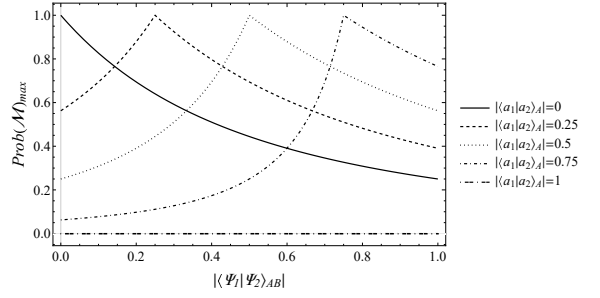}
\caption{The maximum probability $Prob(\mathcal{M})_{max}$ versus $\vert \langle \Psi_2 \vert \Psi_1 \rangle_{AB} \vert$. Note that when $\vert \langle a_1 \vert a_2 \rangle_A \vert = 1$, $Prob(M)_{max} = 0$ for all $\vert \langle \Psi_2 \vert \Psi_1 \rangle_{AB} \vert <1$. And $Prob(M)_{max} = 1$
at $\vert \langle \Psi_2 \vert \Psi_1 \rangle_{AB} \vert = \vert \langle a_1 \vert a_2 \rangle_A \vert = 1$.}\label{fig_1}
\end{center}
\end{figure}

Clearly, the deterministic masking can be regarded as a special case of the probabilistic masking if $Prob(\mathcal{M}) = 1$; i.e., $\Gamma = I_n$. Eq. (\ref{u_matrix}) then reduces to
\begin{equation}\label{matrix_of_l2}
A=X_P,
\end{equation}
which is just a restatement of Lemma \ref{l_2} in the matrix form.

Eq. (\ref{matrix_of_l2}) can be satisfied for $n=2$, for which there is only one restriction $\langle \Psi_1 \vert \Psi_2 \rangle_{AB} = \langle a_1 \vert a_2 \rangle_{A}$. Actually, we only need to adopt the specific form of $\vert \Psi_k \rangle_{AB}$ in \textit{Case I} and let $2 \langle a_1 \vert a_2 \rangle_{A} = \langle b_1^{(1)} \vert b_1^{(2)} \rangle_B + \langle b_2^{(1)} \vert b_2^{(2)} \rangle_B $. That is to say, in the case of $n=2$, we can generalize the orthogonal condition in Theorem \ref{t_1} to the linearly independent condition. For $3 \leq n \leq d$, however, the number of restrictions grows parabolically and it is quite hard to find a set of fixed reducing states for such generalization.
Generally, for a given specific set of $\{ \vert a_1 \rangle_A, \vert a_2 \rangle_A, \dots, \vert a_n \rangle_A \}$, one may adopt a specific form of $\vert \Psi_k \rangle_{AB}$ given by Lemma \ref{l_1} and maximize $Prob(\mathcal{M})$ by suitable choices of $\{ \vert b_i^{(k)} \rangle_B \}_{i=1}^d$ (or $\{ \vert a_i^{(k)} \rangle_A \}_{i=1}^d$).

\section{Conclusion}
In summary, we have discussed both deterministic and probabilistic masking of quantum information. For deterministic case, we have proven that mutually orthogonal quantum states can always be served for quantum information masking. In fact, mutually orthogonal quantum states can be deterministically mapped to the fixed reducing states by only unitary operations. For probabilistic case, we have shown that linear independent states can be probabilistically masked with nonzero success probabilities. When there are only two initial states, the best success probabilities have been derived. Our deterministic and probabilistic masking machines may have important applications in quantum information processing like remote quantum state preparation \cite{Pati1,Bennett1}, measurement-based quantum computation \cite{Briegel} and quantum secret sharing protocols \cite{Zeilinger,Hillery,Karlsson,chen}. Our approach may also apply to the study on multipartite scenario of quantum masking \cite{lims}.

\bigskip
\noindent {\bf Acknowledgments}
We thank the anonymous referee to point out an error in the previous version.  This work is supported by NSFC (11765016,11675113), Jiangxi Education Department Fund (KJLD14088), and and Beijing Municipal Commission of Education (KM201810011009).

\end{document}